\documentclass[11pt, a4paper]{article}

\usepackage{amsmath,amsfonts,amstext,amscd,bezier,amssymb,amsthm,thmtools}
\usepackage{verbatim}
\usepackage{todonotes}
\usepackage{mathtools}
\usepackage[colorlinks,citecolor=blue,linkcolor=blue]{hyperref}
\usepackage{caption}
\usepackage{subcaption}

\newcommand{\ket}[1]{\big| #1 \big\rangle}
\newcommand{\bra}[1]{\big\langle #1 \big|}
\newcommand{\braket}[2]{\big\langle #1 \big| #2 \big\rangle}                 

\declaretheorem[style=definition,name=Definition, parent=section]{definition}

\declaretheorem[style=plain,name=Proposition, parent=section]{proposition}
\declaretheorem[style=plain,name=Corollary, parent=section]{corollary}

\newenvironment{demo}%
{\smallskip\noindent{\bf Proof: }}%
{\hfill$\square$}

\setlength{\textwidth}{6.3in}
\setlength{\textheight}{8.9in}
\setlength{\topmargin}{0pt}
\setlength{\headsep}{0pt}
\setlength{\headheight}{0pt}
\setlength{\oddsidemargin}{0pt}
\setlength{\evensidemargin}{0pt}

\usepackage{soul}

\usepackage{authblk}
\begin{document}

\title{Walking on Vertices and Edges by \\ Continuous-Time Quantum Walk
}


\author[1]{Cau\^e F. Teixeira da Silva}
\author[2]{Daniel Posner}
\author[1,3]{Renato Portugal}

\affil[1]{\small National Laboratory of Scientific Computing (LNCC), 
Petr\'opolis, RJ, 25651-075, Brazil
}

\affil[2]{\small Universidade Federal Rural do Rio de Janeiro (UFRRJ), Seropédica, RJ, 23.897-000, Brazil}

\affil[3]{\small Senai Cimatec, Salvador, BA, 41650-010, Brazil}


\maketitle


\begin{abstract}

The quantum walk dynamics obey the laws of quantum mechanics with an extra locality constraint, which demands that the evolution operator is local in the sense that the walker must visit the neighboring locations before endeavoring to distant places. Usually, the Hamiltonian is obtained from either the adjacency or the Laplacian matrix of the graph and the walker hops from vertices to neighboring vertices. In this work, we define a version of the continuous-time quantum walk that allows the walker to hop from vertices to edges and vice versa.
As an application, we analyze the spatial search algorithm on the complete bipartite graph by modifying the new version of the Hamiltonian with an extra term that depends on the location of the marked vertex or marked edge, similar to what is done in the standard continuous-time quantum walk model.  We show that the optimal running time to find either a vertex or an edge is $O(\sqrt{N_e})$ with success probability $1-o(1)$, where $N_e$ is the number of edges of the complete bipartite graph.


\

\noindent
\textit{Keywords:} {Continuous-time quantum walk, quantum spatial search, total graph}
\end{abstract}

\section{Introduction}

Quantum walks are the quantum counterpart of random walks~\cite{Aharanov1993}. Since random walks find many algorithmic applications, it is expected that quantum walks can also be used for finding interesting quantum algorithms~\cite{Portugal2018}. The quantum walk dynamics obey the laws of quantum mechanics with an extra locality constraint: if the position of the walker is at $x$, then the next position $x'$ must be at the neighborhood of $x$. The time evolution can be either discrete~\cite{AAKV01} or continuous~\cite{Farhi1998} but the spatial structure must be discrete. In the literature, we find many a kind of spatial structure: (1)~the vertex set in the continuous-time quantum walk~\cite{Farhi1998} and staggered quantum walk~\cite{Portugal2016}; (2)~the arc set in the coined quantum walk~\cite{HKSS14}; (3)~the edge set in Szegedy's model~\cite{Szegedy2004}; (4)~triangles in the quantum walk on simplicial complexes~\cite{Matsue2016}; (5)~edges connecting vertices and faces in the quantum walk on embeddings~\cite{Zhan2021}; and so on.


The spatial search algorithm by continuous-time quantum walks was introduced by Childs and Goldstone~\cite{CG04} after the successful definition of discrete-time spatial search on hypercubes~\cite{SKW03}. In the continuous-time case, the time evolution is driven by a Hamiltonian that is obtained from the adjacency matrix modified by a term that depends on the location of the marked vertex.
Experimental implementations of search algorithms by continuous-time quantum walk are described in~\cite{DGPSW20,WSXWJX20,BTPC21,QMWXWX22}.


The total graph $T(G)$ of a graph $G$ is defined by associating the vertices of $T(G)$ with both vertices and edges of $G$, so that two vertices of $T(G)$ are adjacent if and only if the corresponding vertices or edges of $G$ are adjacent or incident. The properties of the total graph of the complete bipartite graph have been extensively analyzed~\cite{Gao2012,Li2019,Zhao2022}. Besides, the total graph has been applied to communication and interconnection networks in order to achieve more robust fault-tolerant networks~\cite{Li2019,Dundar2004}. The spectral analysis of the adjacency matrix of total graphs of regular graphs has been studied in many papers~\cite{Cvetkovic1973,Liu2021,Hazama2002,Akbari2006}.



In this work, we address the problem of consistently defining a continuous-time quantum walk model that allows the walker to hop from vertices to edges and vice versa. An earlier attempt using discrete-time quantum walks revealed too difficult a task~\cite{Abreu_et_al_total_QW_2020}. There are many possible choices, including the case in which the walker is allowed to hop (1)~from a vertex to any neighboring vertex and to any incident edge, and (2)~from an edge $e$ to any neighboring edge and to any endpoint of $e$, which we call total quantum walk. Other cases are discussed in \autoref{SecConclusion}. The Hilbert space associated with a graph $G$ is spanned not only by the vertices but also by the edges and the dynamics is driven by a Hamiltonian defined as $H=\exp(-\textrm{i}\gamma A)$, where $A$ is the adjacency matrix of the total graph $T(G)$.
As an application, we analyze the spatial search algorithm on the complete bipartite graph by the total quantum walk. In this model, we are interested not only in determining the time complexity to find a marked vertex but also a marked edge. We show that the optimal running time to find either a vertex or an edge of the complete bipartite graph is $O(\sqrt{N_e})$ with success probability $1-o(1)$, where $N_e$ is the number of edges of $G$.


This work is organized as follows. In \autoref{SecReviewTotalGraph}, we review the concept of total graph $T(G)$ of a graph $G$.  In \autoref{SecEigenvaluesofTotalGraph}, we describe how to obtain the eigenvectors and eigenvalues of the total graph $T(G)$ using the eigenvectors and eigenvalues of a regular graph $G$. In \autoref{SecTotalQuantumWalkCT}, we define the total quantum walk. In \autoref{sec:method}, we outline a method to determine the computational complexity of continuous-time quantum-walk-based search algorithms on graphs with one marked vertex. In \autoref{SecQuantumSearchonKn}, we analyze quantum search by total quantum walk when the root graph is the complete graph $K_n$. In \autoref{SecQuantumSearchonKnn}, we analyze the quantum search by total quantum walk on the vertices and edges of the bipartite graph $K_{n,n}$. In \autoref{SecNumMet}, we present numerical results that confirms the hypotheses used to find the computational complexity. In \autoref{SecConclusion}, we present our final remarks.

\section{Total graph}\label{SecReviewTotalGraph}

There are several ways to generate new graphs from a root graph, such as, the line graph and the total graph~\cite{Cvetkovic09}. In the line graph case, the vertices represent the edges of its root graph so that the adjacency of the vertices of the line graph is obtained from the adjacency of the edges of the root graph. Formally, 
\begin{definition}
The \textit{line graph} $L(G)$ of a graph $G$ is a graph such that $V(L(G))=E(G)$ and two vertices of $L( G)$ are adjacent if and only if their corresponding edges share a common endpoint in $G$.
\end{definition}

The total graph generalizes the concept of line graph because the vertices of the total graph represent both edges and vertices of the original graph; the adjacency of the vertices of the total graph is obtained not only from the adjacency of the edges and vertices of the root graph but also from the incidence of edges on vertices. Formally,
\begin{definition}
The \textit{total graph} $T(G)$ of a graph $G$ is a graph such that $V(T(G))=V(G)\cup E(G)$ and two vertices of $T(G)$ are adjacent if and only if their corresponding elements are either adjacent in $G$ (when the corresponding elements are two vertices or two edges of $G$) or incident in $G$ (when the corresponding elements are a vertex and an edge of $G$).
\end{definition}

From the definition of total graph, if $V$ and $E$ are the subsets of $V(T(G))$ that represent vertices and edges of its root graph $G$, respectively, then the subgraph of $T(G)$ induced by $V$ is isomorphic to $G$ and the subgraph induced by $E$ is isomorphic to $L(G)$. Using the concept of total graph, we can define three related graph classes: $Q$-graphs, $R$-graphs, and subdivision graphs~\cite{CvetkoviC1975,Cvetkovic09}, which can be used to defined new quantum walk models whose walkers step on vertices and edges. A $Q$-graph $Q(G)$ is a subgraph of $T(G)$ obtained by removing the edges whose endpoints are vertices in $V$. A $R$-graph $R(G)$ is a subgraph of $T(G)$ obtained by removing the edges whose endpoints are vertices in $E$. A subdivision graph $S(G)$ is a subgraph of $T(G)$ obtained by removing the edges that have both endpoints in $V$ or have both endpoints in $E$.

The adjacency matrix of the total graph $T(G)$ is obtained from the adjacency matrices of the root graph $G$ and the line graph $L(G)$. Let us assume that the labels of the vertices of $T(G)$ are ordered so that the first $|V(G)|$ labels correspond to the vertices of $G$ and the remaining labels correspond to the edges of $G$. If we consider the submatrix of $A_{T(G)}$ with the labels only in $V$, we obtain the adjacency matrix $A_G$, and only in $E$, we obtain the adjacency matrix $A_{L(G)}$. If we partition $A_{T(G)}$ into blocks, it remains to consider the blocks with a label in $V$ and the other label in $E$. By the definition of total graph, the entries of these blocks are 1 when the corresponding edge is incident to the corresponding vertex, 0 otherwise. Note that this property is obeyed by the incidence matrix, whose definition is as follows.
\begin{definition}
The \textit{incidence matrix} of a graph $G$ is a $|V(G)|\times |E(G)|$ matrix $R_G$ such that 
\begin{equation*}
{\left(R_G\right)}_{ij}=\begin{cases}
1,& \textrm{if } e_j \text{ is incident to } v_i,\\
0,& \textrm{otherwise.}
\end{cases}
\end{equation*}
\end{definition}

\noindent
Now, using the label ordering presented above, we can obtain the adjacency matrix of the total graph using $A_G$, $A_{L(G)}$, and $R_G$, in the following way:
\begin{equation}\label{eqadjmattotgraph} 
A_{T(G)}=\begin{bmatrix}
A_G & R_G\\
R_G^T & A_{L(G)}
\end{bmatrix}.
\end{equation}
On the other hand, the incidence matrix $R$ obeys 
\begin{eqnarray*}
RR^T&=&A_G+rI,\\
R^TR&=&A_{L(G)}+2I.
\end{eqnarray*}

\subsection{The total graph of a regular graph}\label{SecEigenvaluesofTotalGraph}

In this subsection, we focus on the total graph of regular graphs with the goal of obtaining the eigenvalues and eigenvectors of their adjacency matrices by using our knowledge about the root graph. In fact, we obtain the spectrum of the total graph by using the following theorem:
\begin{proposition}[Cvetkovi\'c~\cite{Cvetkovic1973}]
	Let $G$ be a $r$-regular graph with $m$ edges and $n$ vertices and let $\{\lambda_i:i=1,...,n\}$ be the set of eigenvalues of $A_G$. The eigenvalues of the adjacency matrix of $T(G)$ are $(-2)$ with multiplicity $m-n$ and 
	\begin{equation*}
	\theta_i^\pm \,=\, \lambda_i-1+\frac{r}{2}\pm\sqrt{\lambda_i+1+\frac{r^2}{4}}.
	\end{equation*}
for $i\in \{1,...,n\}$ with multiplicity 1.
\end{proposition}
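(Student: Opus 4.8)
The plan is to exploit the block structure in \eqref{eqadjmattotgraph} together with the two operator identities $RR^T = A_G + rI$ and $R^TR = A_{L(G)} + 2I$, which let me pull eigenvectors of $A_G$ through the off-diagonal blocks. Since $G$ is $r$-regular, $A_G$ and $RR^T$ are simultaneously diagonalizable; I take an orthonormal eigenbasis $\mathbf{v}_i$ of $A_G$ with $A_G\mathbf{v}_i = \lambda_i\mathbf{v}_i$, so that $RR^T\mathbf{v}_i = (\lambda_i + r)\mathbf{v}_i$. For each $i$ I would look for eigenvectors of $A_{T(G)}$ inside the subspace spanned by $\begin{bmatrix}\mathbf{v}_i\\\mathbf{0}\end{bmatrix}$ and $\begin{bmatrix}\mathbf{0}\\R^T\mathbf{v}_i\end{bmatrix}$.

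First I would verify that $A_{T(G)}$ preserves this subspace. Applying the block matrix and using $A_{L(G)}R^T\mathbf{v}_i = (R^TR - 2I)R^T\mathbf{v}_i = R^T(RR^T)\mathbf{v}_i - 2R^T\mathbf{v}_i = (\lambda_i + r - 2)R^T\mathbf{v}_i$, the action of $A_{T(G)}$ on the chosen basis reduces to the $2\times 2$ matrix
\begin{equation*}
M_i = \begin{bmatrix}\lambda_i & \lambda_i + r\\ 1 & \lambda_i + r - 2\end{bmatrix}.
\end{equation*}
Its eigenvalues are the roots of $\theta^2 - (2\lambda_i + r - 2)\theta + (\lambda_i^2 + \lambda_i r - 3\lambda_i - r) = 0$, and a short computation shows the discriminant equals $4\lambda_i + r^2 + 4$, yielding exactly $\theta_i^\pm = \lambda_i - 1 + r/2 \pm \sqrt{\lambda_i + 1 + r^2/4}$. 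This produces $2n$ eigenvectors of $A_{T(G)}$.

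Next I would account for the eigenvalue $-2$. For any $\mathbf{w}\in\ker R$ the vector $\begin{bmatrix}\mathbf{0}\\\mathbf{w}\end{bmatrix}$ satisfies $A_{T(G)}\begin{bmatrix}\mathbf{0}\\\mathbf{w}\end{bmatrix} = \begin{bmatrix}R\mathbf{w}\\A_{L(G)}\mathbf{w}\end{bmatrix} = \begin{bmatrix}\mathbf{0}\\-2\mathbf{w}\end{bmatrix}$, using $A_{L(G)}\mathbf{w} = R^TR\mathbf{w} - 2\mathbf{w} = -2\mathbf{w}$. Since $\text{rank}(R) = \text{rank}(RR^T) = \text{rank}(A_G + rI) = n$ in the generic case, $\dim\ker R = m - n$, giving $m - n$ eigenvectors with eigenvalue $-2$. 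To show these exhaust the spectrum I would invoke orthogonality: the vectors $R^T\mathbf{v}_i$ are mutually orthogonal because $\langle R^T\mathbf{v}_i, R^T\mathbf{v}_j\rangle = (\lambda_j+r)\langle \mathbf{v}_i,\mathbf{v}_j\rangle$, and $\ker R = (\text{im}\,R^T)^\perp$, so all the constructed eigenvectors are mutually orthogonal; the count $2n + (m-n) = m+n$ then confirms a complete eigenbasis of the symmetric matrix $A_{T(G)}$.

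The main obstacle is the degenerate case $R^T\mathbf{v}_i = \mathbf{0}$, which occurs precisely when $\lambda_i = -r$, i.e. when $G$ is bipartite. There the two-dimensional ansatz collapses to one dimension, $A_G + rI$ is singular so $\text{rank}(R) = n-1$, and $\ker R$ gains an extra dimension. I would reconcile this by checking that $\theta_i^+$ evaluated at $\lambda_i = -r$ equals $-2$ (for $r\ge 2$), so the eigenvector "missing" from the collapsed pair is instead supplied by the enlarged kernel and the stated multiplicities remain consistent. Handling this bookkeeping cleanly is the only delicate point, the rest being the routine $2\times 2$ diagonalization above.
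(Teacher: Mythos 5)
Your proof is correct, but note that the paper itself gives no proof of this proposition: it is quoted from Cvetkovi\'c's 1973 paper, and the only argument the authors supply in this section is the one-line derivation of \autoref{corollaryEigenTotalGraph} from Liu and Wang's Laplacian result. So your self-contained derivation is a genuinely different (and more elementary) route. The key steps all check out: the identities $RR^T=A_G+rI$ and $R^TR=A_{L(G)}+2I$ do show that $A_{T(G)}$ preserves $\mathrm{span}\left\{\bigl(\begin{smallmatrix}\mathbf{v}_i\\ \mathbf{0}\end{smallmatrix}\bigr),\bigl(\begin{smallmatrix}\mathbf{0}\\ R^T\mathbf{v}_i\end{smallmatrix}\bigr)\right\}$ with restriction $M_i$; the trace, determinant and discriminant computations are right, giving exactly $\theta_i^\pm$; the kernel vectors $\bigl(\begin{smallmatrix}\mathbf{0}\\ \mathbf{w}\end{smallmatrix}\bigr)$ with $\mathbf{w}\in\ker R$ yield eigenvalue $-2$; and the orthogonality argument via $\langle R^T\mathbf{v}_i,R^T\mathbf{v}_j\rangle=(\lambda_j+r)\delta_{ij}$ and $\ker R=(\mathrm{im}\,R^T)^\perp$ correctly certifies completeness of the $m+n$ eigenvectors. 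Your handling of the degenerate case $\lambda_i=-r$ (where $\|R^T\mathbf{v}_i\|^2=\lambda_i+r=0$, $\theta_i^+=-2$ for $r\ge 2$, $\theta_i^-=-r$, and $\dim\ker R$ grows by one) is exactly the bookkeeping that distinguishes parts (a) and (b) of \autoref{corollaryEigenTotalGraph}, so your proof in fact recovers the eigenvector statement of that corollary as a byproduct, which the citation-based route does not make explicit. The only implicit assumptions worth flagging are $r\ge 2$ (so that $m\ge n$ and the multiplicity $m-n$ makes sense, and so that $\sqrt{(r/2-1)^2}=r/2-1$) and, if one allows disconnected $G$, that the $\lambda_i=-r$ collapse must be applied once per bipartite component; neither affects the substance of the argument.
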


The next step is to obtain the eigenvectors of the total graph. For regular root graphs, Liu and Wang~\cite{Liu2021} describes how to obtain the eigenvectors of the Laplacian matrix of the total graph from the eigenvectors of the root graph and an orthonormal basis of the kernel of the incidence matrix of the total graph. Using Liu and Wang's theorem, we obtain the following similar corollary:
\begin{corollary}\label{corollaryEigenTotalGraph}
	Let $G$ be a $r$-regular connected graph with $n$ vertices, $m$ edges, and $r\geq 2$. 
	Let $r=\lambda_1\geq\dots\geq\lambda_n$ and $\textbf{v}_1,\dots,\textbf{v}_n$ be the eigenvalues and orthonormal eigenvectors of $A(G)$, respectively. 
	Let $\{\textbf{y}_1,\dots,\textbf{y}_{m-n}\}$ be an orthonormal basis of the kernel of $R_G$.
	Define 
\begin{equation*}
\zeta_i^{\pm}= \lambda_i-1+\frac{r}{2}\pm\sqrt{\lambda_i+1+\frac{r^2}{4}}, 
\end{equation*}
and
\begin{equation*}
\textbf{X}_i^{\pm}=\frac{1}{\sqrt c}\left(\begin{matrix}
(2-r-\lambda_i+\zeta^\pm_i)\textbf{v}_i\\
R_G^T\textbf{v}_i
\end{matrix}\right), 
\end{equation*}
where $i\in\{1, \dots,n\}$ and
\[c=(2-r-\lambda_i+\zeta^\pm_i)^2+\lambda_i+r.\]

\begin{itemize}
\item[(a)] If $G$ is non-bipartite, the eigenvalues of $T(G)$ are $\zeta_i^\pm$, $i\in\{1,\dots,n\}$, and $(-2)$ with multiplicity $m-n$, and the corresponding eigenvectors are $\textbf{X}_i^\pm$ and 
\begin{equation*}
\textbf{Y}_j=\left(\begin{matrix}
\textbf{0}\\
\textbf{y}_j
\end{matrix}\right),
\end{equation*}
where $j\in\{1,\dots,m-n\}$.

\item[(b)] If $G$ is bipartite, where $V(G)=V_1\cup V_2$ are the bipartitions of the vertices set, the eigenvalues of $T(G)$ are $\zeta_i^\pm$, $i\in\{1,\dots,n-1\}$, $(-r)$, and $(-2)$ with multiplicity $m-n+1$, and the corresponding eigenvectors are $\textbf{X}_i^\pm$, 
\begin{equation*}
\textbf{Z}=\left(\begin{matrix}
\textbf{J}_1\\
-\textbf{J}_2\\
0
\end{matrix}\right),
\end{equation*}
where $\textbf{J}_i$ is the $\left|V_i\right|$-dimensional vector with with all entries equal to 1,
and 
\begin{equation*}
\textbf{Y}_j=\left(\begin{matrix}
\textbf{0}\\
\textbf{y}_j
\end{matrix}\right), 
\end{equation*}
where $j\in\{1,\dots,m-n\}$.
\end{itemize}
\end{corollary}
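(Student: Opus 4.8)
The plan is to verify directly that each of the proposed vectors is an eigenvector of $A_{T(G)}$ with the claimed eigenvalue, and then to check orthonormality and count dimensions to conclude that they form a complete eigenbasis. Throughout I would use the block form~\eqref{eqadjmattotgraph} together with the two identities $R_GR_G^T=A_G+rI$ and $R_G^TR_G=A_{L(G)}+2I$, which hold because $G$ is $r$-regular. Since $A_{T(G)}$ is symmetric, producing $m+n$ mutually orthonormal eigenvectors suffices to determine the whole spectrum, so the eigenvalues recorded by Cvetkovi\'c's proposition will reappear as a by-product.

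For the vectors $\textbf{X}_i^\pm$, write $a_i^\pm=2-r-\lambda_i+\zeta_i^\pm$ for the leading scalar, so that $\textbf{X}_i^\pm\propto (a_i^\pm\textbf{v}_i,\,R_G^T\textbf{v}_i)^T$. Applying the block matrix and using $A_G\textbf{v}_i=\lambda_i\textbf{v}_i$, $R_GR_G^T\textbf{v}_i=(\lambda_i+r)\textbf{v}_i$, and $R_G^TR_G(R_G^T\textbf{v}_i)=(\lambda_i+r)R_G^T\textbf{v}_i$, the lower block of the eigenvalue equation becomes $(a_i^\pm+\lambda_i+r-2)R_G^T\textbf{v}_i=\zeta_i^\pm R_G^T\textbf{v}_i$, which holds identically by the very definition of $a_i^\pm$. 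The upper block reduces to the scalar condition $a_i^\pm\lambda_i+\lambda_i+r=\zeta_i^\pm a_i^\pm$, and substituting $a_i^\pm$ turns this into the quadratic $\zeta^2-(2\lambda_i+r-2)\zeta+(\lambda_i^2+r\lambda_i-3\lambda_i-r)=0$; comparing the sum and product of its roots with $2\lambda_i+r-2$ and $\lambda_i^2+r\lambda_i-3\lambda_i-r$ shows that $\zeta_i^\pm$ are exactly its solutions. Normalization is immediate from $\|R_G^T\textbf{v}_i\|^2=\textbf{v}_i^TR_GR_G^T\textbf{v}_i=\lambda_i+r$, which gives $\|\textbf{X}_i^\pm\|^2=((a_i^\pm)^2+\lambda_i+r)/c=1$ and shows $c>0$ whenever $\lambda_i>-r$.

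For $\textbf{Y}_j$, the condition $R_G\textbf{y}_j=\textbf{0}$ kills the upper block and gives $A_{L(G)}\textbf{y}_j=(R_G^TR_G-2I)\textbf{y}_j=-2\textbf{y}_j$, so $\textbf{Y}_j$ is a $(-2)$-eigenvector, and these are orthonormal because the $\textbf{y}_j$ are. In the bipartite case the vertex part $\textbf{w}=(\textbf{J}_1,-\textbf{J}_2)^T$ of $\textbf{Z}$ is the $(-r)$-eigenvector of $A_G$, and since every edge joins $V_1$ to $V_2$ one has $(R_G^T\textbf{w})_e=w_u+w_v=1-1=0$, hence $R_G^T\textbf{w}=\textbf{0}$ and $A_{T(G)}\textbf{Z}=(-r\textbf{w},\textbf{0})^T=-r\,\textbf{Z}$.

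It remains to establish orthogonality across families and to count. Orthogonality of $\textbf{X}_i^\pm$ with $\textbf{Y}_j$ and (bipartite case) with $\textbf{Z}$, as well as of $\textbf{X}_i^\pm$ with $\textbf{X}_k^\pm$ for $i\neq k$, follows from $\textbf{v}_i\perp\textbf{v}_k$ and $(R_G^T\textbf{v}_i)^T(R_G^T\textbf{v}_k)=(\lambda_k+r)\textbf{v}_i^T\textbf{v}_k$. The only genuine computation is $\langle\textbf{X}_i^+,\textbf{X}_i^-\rangle=0$, which amounts to $a_i^+a_i^-+\lambda_i+r=0$; by Vieta applied to the quadratic above, $a_i^+a_i^-=-(\lambda_i+r)$, so this holds. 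I would then invoke the standard fact that the unoriented incidence matrix of a connected graph has rank $n$ when $G$ is non-bipartite and rank $n-1$ when $G$ is bipartite, so $\dim\ker R_G$ equals $m-n$ or $m-n+1$, reconciling the number of vectors $\textbf{Y}_j$ with the stated multiplicity of $(-2)$; the counts then give $2n+(m-n)=m+n$ and $2(n-1)+1+(m-n+1)=m+n$, a complete orthonormal eigenbasis in each case. I expect the main obstacle to be the bipartite degeneracy at $\lambda_n=-r$: there $R_G^T\textbf{v}_n=\textbf{0}$ forces $c=0$ and makes $\textbf{X}_n^\pm$ ill-defined, so these two vectors must be discarded, with $\textbf{X}_n^-$ effectively becoming $\textbf{Z}$ (eigenvalue $-r$) and the would-be $\textbf{X}_n^+$ (eigenvalue $-2$) absorbed into the enlarged $(-2)$-eigenspace supplied by the extra dimension of $\ker R_G$. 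Keeping the bookkeeping of this collapse consistent with the multiplicities, rather than any single algebraic identity, is the delicate part.
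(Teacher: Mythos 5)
Your proof is correct, but it takes a genuinely different route from the paper. The paper's entire proof is a one-line reduction: since $T(G)$ is $2r$-regular, its Laplacian and adjacency spectra are related by an affine shift, so the authors simply substitute $\mu_i=r-\lambda_i$ and $\zeta_i^\pm=2r-\theta_i^\mp$ into Liu and Wang's theorem on the Laplacian eigenvectors of total graphs of regular graphs. You instead verify everything from scratch via the block decomposition of $A_{T(G)}$ and the identities $R_GR_G^T=A_G+rI$, $R_G^TR_G=A_{L(G)}+2I$; your scalar conditions, the quadratic $\zeta^2-(2\lambda_i+r-2)\zeta+(\lambda_i^2+r\lambda_i-3\lambda_i-r)=0$, the Vieta identity $a_i^+a_i^-=-(\lambda_i+r)$, the rank argument for $\ker R_G$, and the dimension counts all check out. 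What your approach buys is a self-contained, elementary proof that does not depend on an external theorem (and that makes the orthonormality and completeness of the eigenbasis explicit, which the citation-based proof leaves implicit); what the paper's approach buys is brevity and the reuse of an already-published spectral result. One small imprecision in your treatment of the bipartite collapse at $\lambda_n=-r$: there $\zeta_n^+=-2$ and $\zeta_n^-=-r$, and while $c$ does vanish for the $+$ branch (since $a_n^+=0$ and $\lambda_n+r=0$), for the $-$ branch one has $c=(2-r)^2$, which is nonzero for $r>2$; in that case $\textbf{X}_n^-$ is perfectly well defined and equals $\pm(\textbf{v}_n,\textbf{0})^T$, i.e.\ the normalized $\textbf{Z}$, exactly as you guessed. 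So only $\textbf{X}_n^+$ is genuinely ill-defined, and its place is taken by the extra dimension of $\ker R_G$; your bookkeeping of the multiplicities is otherwise consistent.
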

\begin{demo}
Replace $\mu_i=r-\lambda_i$ and $\zeta_i^\pm=2r-\theta_i^\mp$ in Liu and Wang's theorem~\cite{Liu2021}.
\end{demo}

\section{Continuous-time quantum walk on vertices and edges}\label{SecTotalQuantumWalkCT}

The spatial structure of discrete-time quantum walks on graphs is (1)~the arc set in the coined model~\cite{HKSS14}, (2)~the edge set in Szegedy's model~\cite{Szegedy2004}, and the vertex set in the staggered model~\cite{Portugal2016}. On the other hand, the spatial structure of the continuous-time quantum walk is the vertex set~\cite{Farhi1998}. In this section, we define a continuous-time quantum walk that takes place on both vertices and edges in a mathematically consistent way. We name it \textit{total} quantum walk.

Let $G$ be a graph with vertex set $V(G)$ and edge set $E(G)$. We associate with this graph a $(|V(G)|+|E(G)|)$-dimensional Hilbert space spanned by $\big\{\ket{v}: v\in V(E)\cup E(G)\big\}$.
The evolution operator of the total quantum walk is obtained from the adjacency matrix $A_{T(G)}$ of the total graph $T(G)$, as follows
\begin{equation*}
U(t)=\mathrm{e}^{-\mathrm{i}\gamma H t}, 
\end{equation*}
where $\gamma$ is a real positive parameter and $H=A_{T(G)}$. In this model, if the walker is on a vertex $v$, after an infinitesimal time, the walker will stay put or hop to any neighboring vertex $v'$ or to any edge incident to $v$. If the walker is on an edge $e$, after an infinitesimal time, the walker will stay put or hop to any edge that has a common vertex with $e$ or to any vertex that is an endpoint of $e$.

The total quantum walk on a graph $G$ is equivalent to the standard continuous-time quantum walk on $T(G)$. The advantage of the new formulation consists in splitting the spatial part into two different types that can be exploited in applications, such as searching and decoherence analysis, in order to check whether vertices and edges behave in the same way or not.

 The spatial search algorithm is defined by modifying the model's Hamiltonian by adding an extra term that depends on the location of the marked vertex or edge $w$ as
\[H=-\gamma A-\ket{w}\bra{w}.\]
It is interesting to determine whether the time complexity of finding the marked element depends on its type.
The probability of finding a marked vertex at time $t$ is
\begin{equation}\label{eq:p(t)}
	p(t) = \left|\braket{w}{\psi(t)}\right|^2,
\end{equation}
where $\ket{\psi(t)}=U(t)\ket{\psi(0)}$ is the state of the quantum walk at time $t$ and $\ket{\psi(0)}$ is the initial state.
The goal of the algorithm is to find the optimal values of parameters $t$ and $\gamma$ so that the success probability is as high as possible.

\subsection{Method to find the computational complexity of search algorithms}\label{sec:method}

Ref.~\cite{Lugao2022,Bezerra21} has outlined a method for finding the computational complexity of spatial search algorithms when the graph has multiple marked vertices. In this subsection, we review this method for the single marked case, which is similar to the method described in~\cite{CGTX22}. For an operator $U$, let $\sigma(U)$ be the spectrum of $U$.
Suppose that the adjacency matrix $A$ of the total graph has exactly $q+1$ distinct eigenvalues $\phi_0>\phi_1>\dots>\phi_q$.
Let $P_{\ell}$ denote the orthogonal projector onto the eigenspace of $A$
for the eigenvalue $\phi_{\ell}$ for $0\leqslant \ell\leqslant q$, that is,
\begin{equation*}
	A = \sum_{\ell=0}^q \phi_{\ell} P_{\ell}.
\end{equation*}
Let $\lambda$ and $\ket{\lambda}$ be an eigenvalue and a normalized eigenvector of $H$, respectively. Hamiltonian $H$ and operator $-\gamma A$ may share common eigenvalues and eigenvectors, as shown in the following Proposition:

\begin{proposition}\label{prop:lambda_in_sigma_A}
$\lambda\in \sigma(-\gamma A)$ and $(-\gamma A)\ket{\lambda}=\lambda\ket{\lambda}$ if and only if $\braket{w}{\lambda}=0$.
\end{proposition}

The eigenvectors $\ket{\lambda}$ such that $\braket{w}{\lambda}=0$ play no role in the calculation of the probability $p(t)$, as can be seen from Eq.~\eqref{eq:p(t)}. The eigenvalues $\lambda$ associated with those eigenvectors are in the spectrum of $-\gamma A$.  Let us assume that $\braket{w}{\lambda}\neq 0$. Then, $\lambda\not\in \sigma(-\gamma A)$.

Using the definition of $H$ and assuming that $\lambda\not\in\sigma(-\gamma A)$, we obtain
\begin{equation}\label{masterequation}
	P_{\ell} \ket{\lambda}  = -\frac{\braket{w}{\lambda}}{\lambda+\gamma\phi_{\ell}}P_{\ell}\ket{w}.
\end{equation}
Using $\sum_{\ell=0}^q P_{\ell}=I$, we have 
\begin{equation*}
	\braket{w}{\lambda}=\sum_{\ell=0}^q \bra{w}P_{\ell}\ket{\lambda}.
\end{equation*}
Using Eq.~\eqref{masterequation}, we obtain
\begin{equation}\label{eq:sum_ell}
	\sum_{\ell=0}^q \frac{\left\|P_{\ell}\ket{w}\right\|^2}{\lambda+\gamma\phi_{\ell}}=-1.
\end{equation}

Usually, the spectral gap of the modified Hamiltonian tends to zero when the number of vertices increases. Let us assume that the asymptotic computational complexity of the search algorithm depends only on two eigenvalues $\lambda^{\pm}$ closest of $-\gamma\phi_0$ so that
\begin{equation}\label{eq:lambdapm}
    \lambda^{\pm} = -\gamma\phi_0 \pm {\epsilon} + O\big(\epsilon^2\big).
\end{equation}
In this case, the spectral gap is asymptotically $2\epsilon>0$. Eq.~(\ref{eq:sum_ell}) simplifies to
\begin{equation}\label{eq:lambda_exact}
    \sum_{\ell=0}^q \frac{\left\|P_{\ell}\ket{w}\right\|^2}{\gamma(\phi_{\ell}-\phi_0)\pm {\epsilon}}=-1.
\end{equation}
Up to second order in $\epsilon$ we obtain
\begin{equation}\label{eq:for_epsilon}
    {\bra{w}P_0\ket{w}} \pm \left(1  - \frac{S_1}{\gamma}\right){\epsilon} - \frac{ S_2}{\gamma^2}\,\epsilon^2=0,
\end{equation}
where
\begin{equation}\label{eq:S_1}
    S_1 = \sum_{\ell=1}^q \frac{\left\|P_{\ell}\ket{w}\right\|^2}{\phi_0-\phi_{\ell}}
\end{equation}
and 
\begin{equation}\label{eq:S_2}
    S_2 = \sum_{\ell=1}^q \frac{\left\|P_{\ell}\ket{w}\right\|^2}{(\phi_0-\phi_{\ell})^2}.
\end{equation}

Eq.~(\ref{eq:for_epsilon}) cannot have the linear term because that is the only way to have the $\pm \epsilon$ term in Eq.~\eqref{eq:lambdapm}. Then, 
\begin{equation}\label{eq:gamma}
    \gamma = S_1
\end{equation}
and
\begin{equation}\label{eq:epsilon}
    \epsilon= \frac{S_1\left\|P_0\ket{w}\right\|}{\sqrt{S_2}}.
\end{equation}

Using Eq.~\eqref{eq:p(t)} and an orthonormal set $\left\{\ket{\lambda}\right\}$ of eigenvectors of $H$, the probability of finding a marked vertex as a function of $t$ is 
\begin{equation*}
    p(t) =  \left| \sum_\lambda \mathrm{e}^{-\mathrm{i}\lambda t} \braket{w}{\lambda}\braket{\lambda}{\psi(0)}\right|^2 .
\end{equation*}
The exact eigenvalues $\lambda\not\in\sigma(-\gamma A)$ are roots of Eq.~(\ref{eq:lambda_exact}). In order to simplify the calculation, let us assume that the asymptotic success probability depends only on $\lambda^{\pm}$ and their associated eigenvectors $\ket{\lambda^{\pm}}$. Besides, let us assume that asymptotically
\begin{equation}\label{bra_lam_w}
    \braket{\lambda^+}{\psi(0)}\braket{w}{\lambda^+}=-\braket{\lambda^-}{\psi(0)}\braket{w}{\lambda^-}+o(1).
\end{equation}
Under those assumptions, the probability of finding a marked vertex reduces to
\begin{equation}\label{eq:p(t)=sin^2}
    p(t)=4\left|\braket{\lambda^+}{\psi(0)}\right|^2 \left|\braket{w}{\lambda^+}\right|^2\sin^2{\epsilon t} -o(1)+o(\epsilon t).
\end{equation}
Then, the optimal running time is
\begin{equation}\label{final_t_run}
    t_\text{opt} = \frac{\pi}{2\epsilon} 
\end{equation}
and the success probability is 
\begin{equation}\label{final_p_succ}
    p_{\mathrm{succ}} = 4\left|\braket{\lambda^+}{\psi(0)}\right|^2 \left|\braket{w}{\lambda^+}\right|^2 - o(1).
\end{equation}
There are many examples in the literature that obey the restrictions above~\cite{CG04,TSP22}.

Using Eq.~(\ref{masterequation}) and
\begin{equation*}
    \sum_{\ell=0}^q \left\| P_{\ell} \ket{\lambda^\pm } \right\|^2=1,
\end{equation*}
we obtain
\begin{equation*}
	\frac{1}{\left|\braket{w}{\lambda^\pm }\right|^2}  = \sum_{\ell=0}^q\frac{\left\|P_{\ell}\ket{w}  \right\|^2}{(\lambda^\pm +\gamma\phi_{\ell})^2}.
\end{equation*}
By re-scaling $\ket{\lambda^\pm }$ with a global phase, we obtain $\left|\braket{w}{\lambda^\pm }\right|=\braket{w}{\lambda^\pm }$. Taking the limit $\epsilon\xrightarrow{} 0$, we obtain
\begin{equation}\label{eq:wlambda}
    \braket{w}{\lambda^\pm } = \frac{S_1}{\sqrt{2S_2}}.
\end{equation}

The last missing quantity is $\braket{\lambda^\pm }{\psi(0)}$. If the total graph is regular and connected, the uniform superposition is an eigenvector of $\exp(-\mathrm{i}\gamma A)$ and $\exp(-\mathrm{i}\gamma\phi_0)$ is its eigenvalue with multiplicity one, where $\phi_0$ is the graph greatest eigenvalue. In this case, using Eq.~\eqref{masterequation} taking $\ell=0$ we obtain
\begin{equation*}
    \braket{\psi(0)}{\lambda^\pm } = -\frac{\braket{w}{\lambda^\pm }\braket{\psi(0)}{w}}{\lambda^\pm +\gamma\phi_{0}}.
\end{equation*}
Since all terms on the right-hand side are supposedly known, this means that we have a recipe to calculate $\braket{\psi(0)}{\lambda^\pm }$.  Assuming that $\ket{\psi(0)}$ is the normalized uniform state and by taking the limit $\epsilon\xrightarrow{} 0$, we obtain
\begin{equation}\label{eq:psi0lam}
     \braket{\psi(0)}{\lambda^\pm} = \mp\frac{1}{\sqrt{2N}  \left\| P_{0} \ket{w} \right\|},
\end{equation}
where $N$ is the number of vertices of the total graph. Note that Eq.~(\ref{bra_lam_w}) is consistent with Eqs.~(\ref{eq:wlambda}) and~(\ref{eq:psi0lam}).

If we use the amplitude amplification procedure~\cite{Portugal2018}, the time complexity of the search algorithm is $t_\text{opt}/\sqrt{p_\text{succ}}$. Using Eqs.~(\ref{final_t_run}) and~(\ref{final_p_succ}), the total running time  with success probability $\Omega(1)$ is
\begin{equation}
   t_\text{run}= \frac{\pi S_2}{2\,S_1^2} \sqrt{N}.
\end{equation}
Note that the time complexity of the search algorithm after using amplitude amplification does not depend on $\left\| P_{0} \ket{w} \right\|$. Besides, the time complexity is $O(\sqrt N)$ if $S_2$ has the same order of $S_1^2$.



\subsection{Quantum search by total quantum walk on the complete graph}\label{SecQuantumSearchonKn}

The class of complete graphs is the first candidate for the searching problem using the total quantum walk. Before starting to analyze the dynamics, let us consider the following result. Let $K_n$ be the complete graph with $n$ vertices. The total graph $T(K_n)$ is isomorphic to the Johnson graph $J(n+1,2)$ ($J(4,2)$ is illustrated in Fig.~\ref{fig:J(4,2)}) as shown in the next proposition:
\begin{proposition}
The total graph $T(K_n)$ of the complete graph $K_n$ is isomorphic to the Johnson graph $J(n+1,2)$.
\end{proposition}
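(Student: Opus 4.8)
The plan is to exhibit an explicit graph isomorphism between $T(K_n)$ and $J(n+1,2)$ by constructing a bijection of the vertex sets and then verifying that it preserves adjacency in both directions. First I would fix notation for the Johnson graph: the vertices of $J(n+1,2)$ are the $2$-element subsets of $\{1,\dots,n+1\}$, and two such subsets are adjacent exactly when their intersection has size $1$. A quick count confirms that the vertex sets have equal cardinality, since $J(n+1,2)$ has $\binom{n+1}{2}=n+\binom{n}{2}$ vertices, matching $|V(T(K_n))|=|V(K_n)|+|E(K_n)|$.

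The key idea is to single out one ``apex'' label, say $n+1$, and use it to encode the vertex/edge dichotomy of $T(K_n)$. Concretely, I would define $\phi\colon V(T(K_n))\to V(J(n+1,2))$ by sending each vertex $i\in V(K_n)=\{1,\dots,n\}$ to the pair $\{i,n+1\}$ and each edge $\{i,j\}\in E(K_n)$ to the same pair $\{i,j\}$. This $\phi$ is a bijection: the $2$-subsets of $\{1,\dots,n+1\}$ containing $n+1$ are exactly the images of the vertices, those avoiding $n+1$ are exactly the images of the edges, and each subset is hit once.

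It then remains to check that $x$ and $y$ are adjacent in $T(K_n)$ if and only if $\phi(x)$ and $\phi(y)$ are adjacent in $J(n+1,2)$, splitting into the three cases of the definition of the total graph. (i) For two distinct vertices $i\neq j$ of $K_n$ they are always adjacent (the graph is complete), and their images $\{i,n+1\}$ and $\{j,n+1\}$ always meet in exactly $\{n+1\}$, so they are always adjacent in $J(n+1,2)$. (ii) For two edges $e,f$, adjacency in $T(K_n)$ means adjacency in $L(K_n)$, i.e. $|e\cap f|=1$, which is precisely adjacency of $\phi(e)=e$ and $\phi(f)=f$ in the Johnson graph. (iii) For a vertex $i$ and an edge $e=\{k,l\}$, adjacency in $T(K_n)$ is incidence $i\in e$; since $n+1\notin e$, we have $|\{i,n+1\}\cap e|=1$ if and only if $i\in e$, again matching Johnson adjacency. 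Each equivalence holds in both directions, so $\phi$ is an isomorphism.

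As for the difficulty: there is no real obstacle in the verification, which is a routine case check; the only creative step is guessing the right bijection, namely recognizing that appending a single new label turns the disjoint union $V(K_n)\cup E(K_n)$ into the full set of $2$-subsets of an $(n+1)$-element set while simultaneously translating the adjacency/incidence rules of $T(K_n)$ into the ``intersect in one element'' rule of $J(n+1,2)$. An equivalent and perhaps more conceptual route would be to note that $J(n+1,2)$ is the line graph $L(K_{n+1})$ (the triangular graph) and to prove $T(K_n)\cong L(K_{n+1})$ directly, identifying the vertices of $K_n$ with the edges of $K_{n+1}$ incident to a fixed apex vertex; this is the same bijection viewed from the other side.
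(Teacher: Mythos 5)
Your proposal is correct and uses exactly the same isomorphism as the paper (vertex $i\mapsto\{i,n+1\}$, edge $\{i,j\}\mapsto\{i,j\}$); the only difference is that you carry out the three-case adjacency check explicitly where the paper declares it straightforward and points to $J(n+1,2)\cong L(K_{n+1})$, an observation you also make at the end.
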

\begin{proof}
Let us adopt the following convention: an arbitrary vertex of $T(K_n)$ has label $i$ or $\{i,j\}$ for $1\le i,j\le n$ depending on whether the corresponding element in the root graph $K_n$ is a vertex or an edge, respectively. An arbitrary vertex $v$ of $J(n+1,2)$ has label $v=\{i,j\}$ for $1\le i,j\le n+1$ and two vertices $v$ and $w$ are adjacent if and only if $|v\cap w|=1$. The isomorphism between $T(K_n)$ and $J(n+1,2)$ is established by mapping vertex $\{i,j\}$ of $T(K_n)$ into vertex  $\{i,j\}$ of $J(n+1,2)$ and by mapping vertex $i$ of $T(K_n)$ into vertex $\{i,n+1\}$ of $J(n+1,2)$. It is straightforward to check that two vertices are adjacent in $T(K_n)$ if and only if the corresponding vertices are adjacent in $J(n+1,2)$ by using the fact that $J(n+1,2)$ is the line graph of $K_{n+1}$.
\end{proof}

\begin{figure}[!ht]
\centering
\includegraphics[scale=0.5]{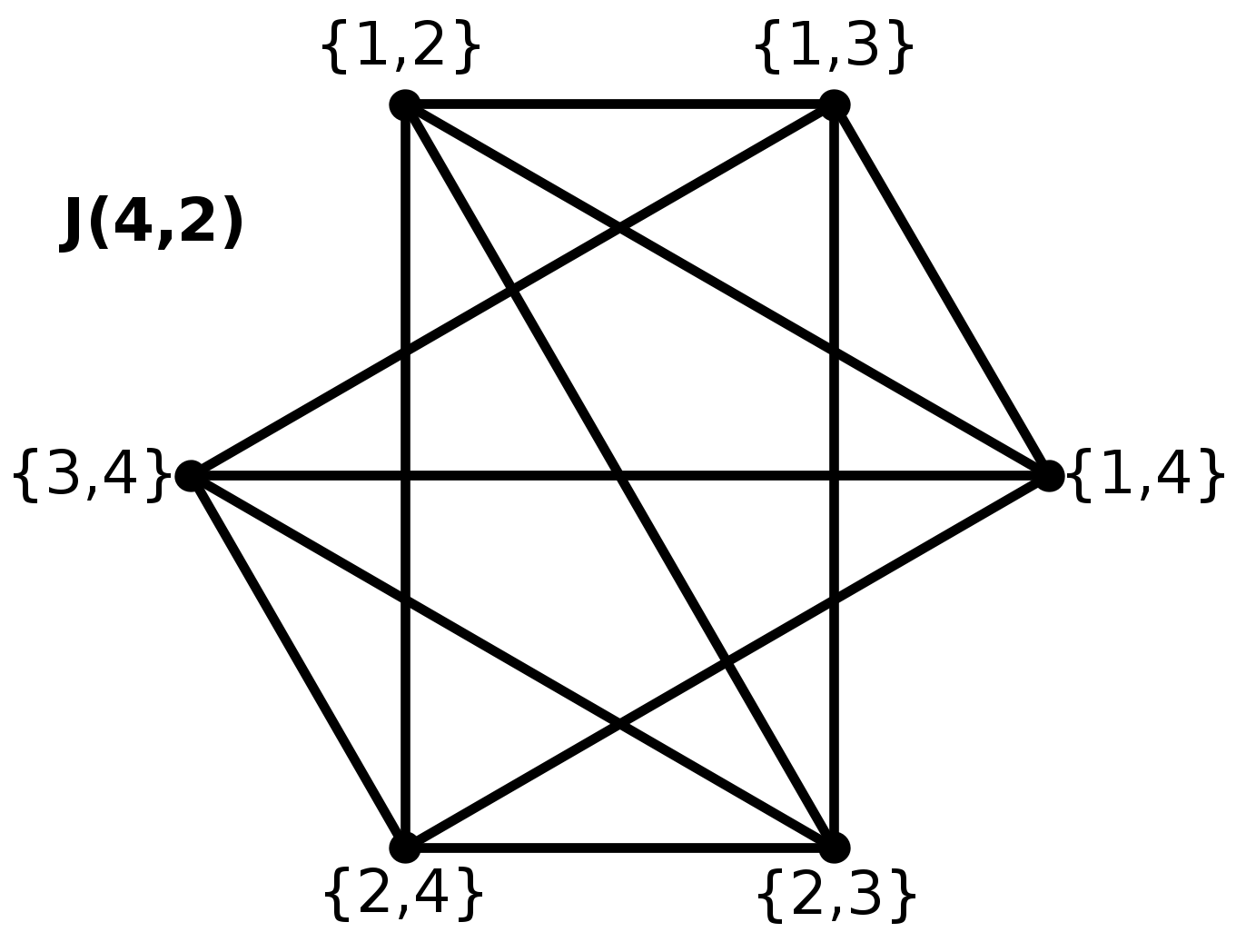}
\caption{The Johnson graph $J(4,2)$.}
\label{fig:J(4,2)}
\end{figure}


In the searching problem on the total quantum walk on the complete graph, the target element may be a vertex or an edge of $K_n$. The first conclusion we reach is that time complexity does not depend on the location of the marked element because  $T(K_n)$ is strongly regular~\cite{Liu2021}. Quantum search algorithm on $J(n+1,2)$ and strongly regular graphs were addressed in~\cite{TSP22,Janmark2014}, which has shown that 
the probability of finding the marked vertex of $J(n+1,2)\cong T(K_n)$ is
\begin{equation}
p(t)=\left(1-O\left(\dfrac{1}{n}\right)\right)\sin^2 \dfrac{t}{\sqrt{N }},
\end{equation}
where $N$ is the number of vertices plus the number of edges of $K_n$. Then, the optimal running time is 
\begin{equation}
t_\text{run}=\dfrac{\pi\sqrt{N}}{2},
\end{equation}
and the asymptotic success probability is 1.

\subsection{Quantum search by total quantum walk on the complete bipartite graph}\label{SecQuantumSearchonKnn}

In this subsection, we address the spatial search algorithm on the complete bipartite graph. 
The goal of the spatial search algorithm using the total quantum walk is to find either marked vertex or a marked edge.  

Let $K_{n,n}$ be the complete bipartite graph with $n$ vertices in each partition, so that $V(K_{n,n})=V_1\cup V_2$. The labels of the vertices of $V_1$ are $1,\dots,n$, and of $V_2$ are $1',\dots,n'$. The label of the edge connecting $i$ and $j'$ is $e_{i,j}$. 
The eigenvalues of $K_{n,n}$ are $n, 0, -n$ with multiplicities $1, 2n-2, 1$, respectively. A norm-1 eigenvector associated with eigenvalue $n$ is \begin{equation}\label{eigenvecKnn2}
\ket{v_n}=\displaystyle\dfrac{1}{\sqrt{2n}}\left(\sum_{i=1}^{n}\ket{i}+\ket{i'}\right).
\end{equation}
An orthonormal basis for the $0$-eigenspace is $\big\{\ket{v_k},\ket{v_{k'}}\big\}$ for $k\in\{1,...,n-1\}$, where
\begin{equation}\label{eigenvecKnn1}
\ket{v_k}=\displaystyle \sqrt{\dfrac{k}{k+1}}\left(\ket{k+1}-\sum_{i=1}^{k}\dfrac{1}{ k}\ket{i}\right)
\end{equation}
and
\begin{equation*}
\ket{v_{k'}}=\displaystyle \sqrt{\dfrac{k}{k+1}}\left(\ket{(k+1)'}-\sum_{i=1}^{k }\dfrac{1}{k}\ket{i'}\right).
\end{equation*}
A norm-1 eigenvector associated with eigenvalue $(-n)$ is 
\begin{equation*}
\ket{v_0}=\dfrac{1}{\sqrt{2n}}\left(\sum_{i=1}^{n}\ket{i}-\ket{i'}\right).
\end{equation*}
An orthonormal basis for the kernel of the incidence matrix $R_{K_{n,n}}$ is 
\begin{eqnarray}\label{ortbasisincKnn}
\ket{w_{i,j}}=\displaystyle\sqrt{h_{i,j}}\sum_{k_2=1}^{j+1}\sum_{k_1=1}^{i+1}\dfrac{(-i)^{\delta_{k_1,i+1}}(-j)^{\delta_{k_2,j+1}}}{ij}\ket{e_{k_1,k_2}},
\end{eqnarray}
where $h_{i,j}=\dfrac{ij}{ij+i+j+1}$ and $i,j\in\{1,...,(n-1)\}$.

Using this eigenbasis of $K_{n,n}$, \autoref{corollaryEigenTotalGraph}, and setting
\begin{equation*}
\theta_0^\pm\coloneqq\dfrac{n-2\pm\sqrt{n^2+4}}{2}
\end{equation*}
and
\begin{equation*}
\Delta_n^\pm\coloneqq\dfrac{n^2+4\pm(2-n)\sqrt{n^2+4}}{2},
\end{equation*}
the list of eigenvalues of $T(K_{n,n})$ in decreasing order is $\phi_0=2n$, $\phi_1=\theta_0^+$, $\phi_2=n-4$, $\phi_3=\theta_0^-$, $\phi_4=- 2$, and $\phi_5=-n$ with multiplicities $1$, $2(n-1)$, $1$, $2(n-1)$, $(n-1)^2$, and $1$, respectively. The list of eigenvectors using the notation of \autoref{corollaryEigenTotalGraph} is as follows.
A norm-1 eigenvector of eigenvalue $2n$ is 
\begin{equation}\label{eigenvecTKnnX+n}
\ket{X_n^+}=\dfrac{1}{\sqrt{4+2n}}\left(2\ket{v_n}+R_{K_{n,n}}^T\ket{v_n}\right).
\end{equation}
A norm-1 eigenvector of eigenvalue $(n-4)$ is 
\begin{equation*}
\ket{X^-_n}=\dfrac{1}{\sqrt{n^2+6n+4}}\left((-2-n)\ket{v_n}+R_{K_{n,n} }^T\ket{v_n}\right).
\end{equation*}
An orthonormal basis for the $\theta_0^\pm$-eigenspace is $\big\{\ket{X^\pm_{0,k}},\ket{X^\pm_{0,k'}}:k=1,\dots,n-1\big\}$, where
\begin{eqnarray*}
\ket{X^\pm_{0,k}}=\dfrac{1}{\sqrt{\Delta^\pm_n}}\left(\theta^\mp_n\ket{v_k}+R_{K_{n, n}}^T\ket{v_k}\right).
\end{eqnarray*}
A norm-1 eigenvector of eigenvalue $(-n)$ is
\begin{equation*}
\ket{Z}=\ket{v_0}.
\end{equation*}
An orthonormal basis for the $(-2)$-eigenspace is 
\begin{equation}\label{eigenvecTKnnYij}
\left\{\ket{Y_{i,j}}=\ket{w_{i,j}}:\,i,j=1,\dots,n-1\right\}.
\end{equation}

Let us turn our attention to finding the time complexity of the search algorithm. Since there are two possibilities for the marked element, we set $\ket{w^v}=\ket{n}$ for a marked vertex and $\ket{w^e}=\ket{e_{n,n}}$ for a marked edge, without loss of generality. We use the method described in the previous subsection. Since the total graph of $K_{n,n}$ is regular, we have 
\[
\left\|P_0\ket{w}\right\| =\frac{1}{\sqrt N},
\]
where $N=n^2+2n$ is the number of vertices of $T(K_{n,n})$.

By substituting~(\ref{eigenvecTKnnX+n}) to~(\ref{eigenvecTKnnYij}) into (\ref{eq:S_1}) and (\ref{eq:S_2}) with $q=5$, we obtain
\begin{equation}\label{sumSv1}
S^v_{1}=\dfrac{1}{2n}+\dfrac{5}{12{n}^{2}}+O\left(\dfrac{1}{n^3}\right),
\end{equation}
\begin{equation*}
S^e_{1}=\dfrac{1}{2n}+\dfrac{1}{2{n}^{2}}+O\left(\dfrac{1}{n^3}\right),
\end{equation*}
and
\begin{equation}\label{sumSe2}
S^v_{2}=S^e_{2}=\dfrac{1}{4{n}^{2}}+O\left(\dfrac{1}{n^3}\right).
\end{equation}

Using (\ref{eq:epsilon}), we obtain
\begin{equation}\label{eq:eps_v}
\epsilon^v=\left(1-\dfrac{7}{9n}+O\left(\dfrac{1}{n^2}\right)\right)\dfrac{1}{\sqrt{N}}
\end{equation}
and
\begin{equation}\label{eq:eps_e}
\epsilon^e=\left(1-\dfrac{1}{n}+O\left(\dfrac{1}{n^2}\right)\right)\dfrac{1}{\sqrt{N}}.
\end{equation}

Using (\ref{final_t_run}), the optimal running time in both cases is
\begin{equation}
t_\text{run}=\dfrac{\pi\sqrt{N}}{2}
\end{equation}
and using (\ref{final_p_succ}), the success probability of finding a vertex is
\begin{equation}
p^v_\text{succ}=1-\dfrac{14}{9n}+O\left(\dfrac{1}{n^2}\right)
\end{equation}
and of finding an edge is
\begin{equation}
p^e_\text{succ}=1-\dfrac{2}{n}+O\left(\dfrac{1}{n^2}\right).
\end{equation}
The difference between searching a vertex or an edge can only be seen in second order terms. 


\subsection{Numerical check}\label{SecNumMet}

Our method to find the computational complexity of the search algorithm on bipartite graphs using the total quantum walk employs two hypotheses that must be checked. In this section, we show numerically that we can rely on those hypotheses for total graph of complete bipartite graphs. Note that the computational effort to check the hypotheses is smaller than to check the final results (success probability and running time), because in the latter case we need to simulate the whole dynamics, while in the first case we basically calculate the eigenvectors and eigenvalues of the Hamiltonian. 

The first hypothesis is used in Eq.~(\ref{eq:lambdapm}), which assumes that the spectral gap tends to zero when the number of vertices increases. Besides, $\lambda^+$ and $\lambda^-$ are asymptotically symmetric about $-\gamma \phi_0$. This hypothesis is analyzed in the graphs of Fig.~\ref{Fig1}. The left-hand graph depicts $\pm\epsilon=\lambda^\pm+\gamma \phi_0$ as a function of $N$ when there is one marked vertex, and the right-hand graph when there is one marked edge. The numerical calculations show that $\epsilon$ tends to zero in a symmetric way. This indicates that it is reasonable to work under this hypothesis.

\begin{figure}[!ht]
\centering
\includegraphics[scale=0.5]{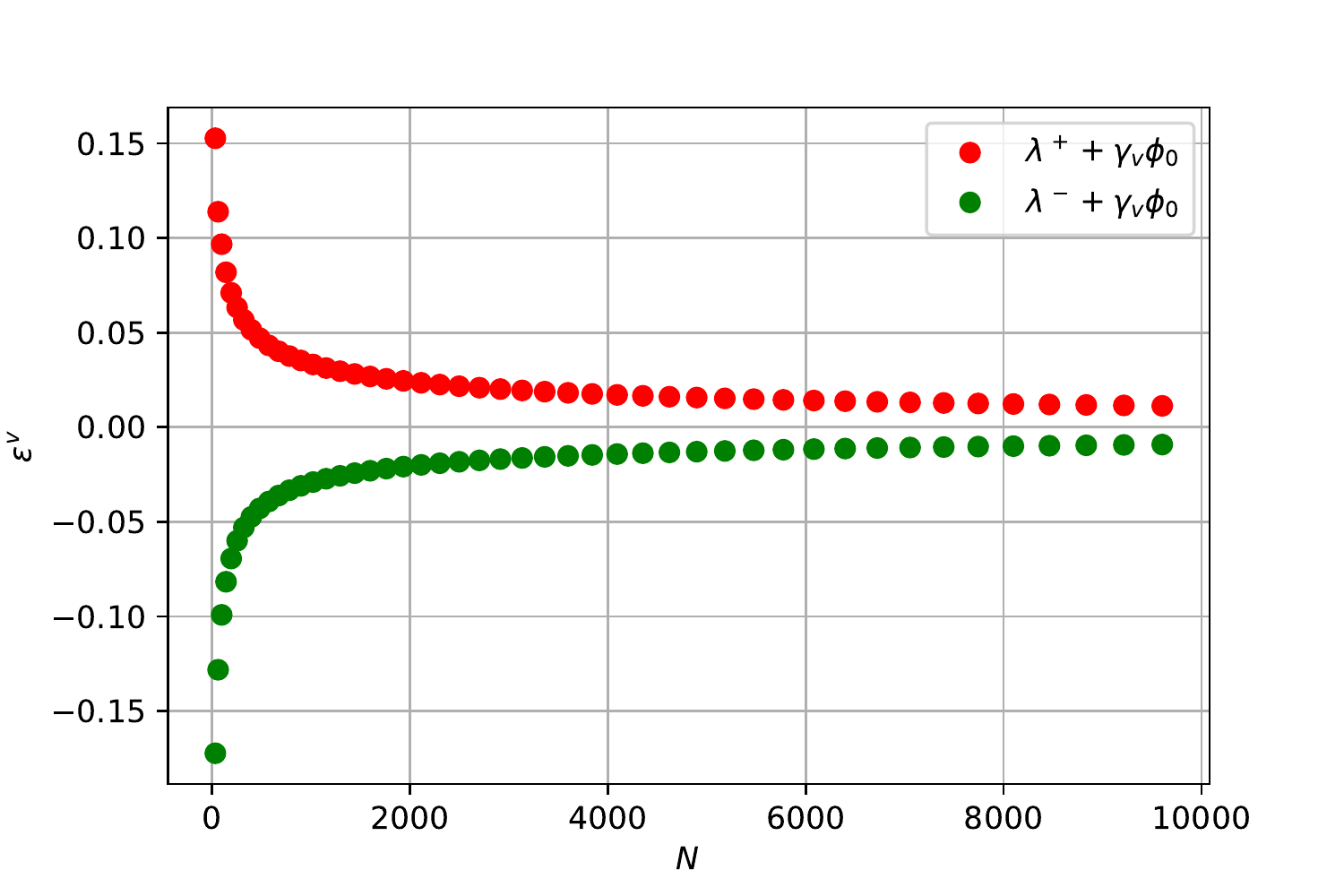}
\includegraphics[scale=0.5]{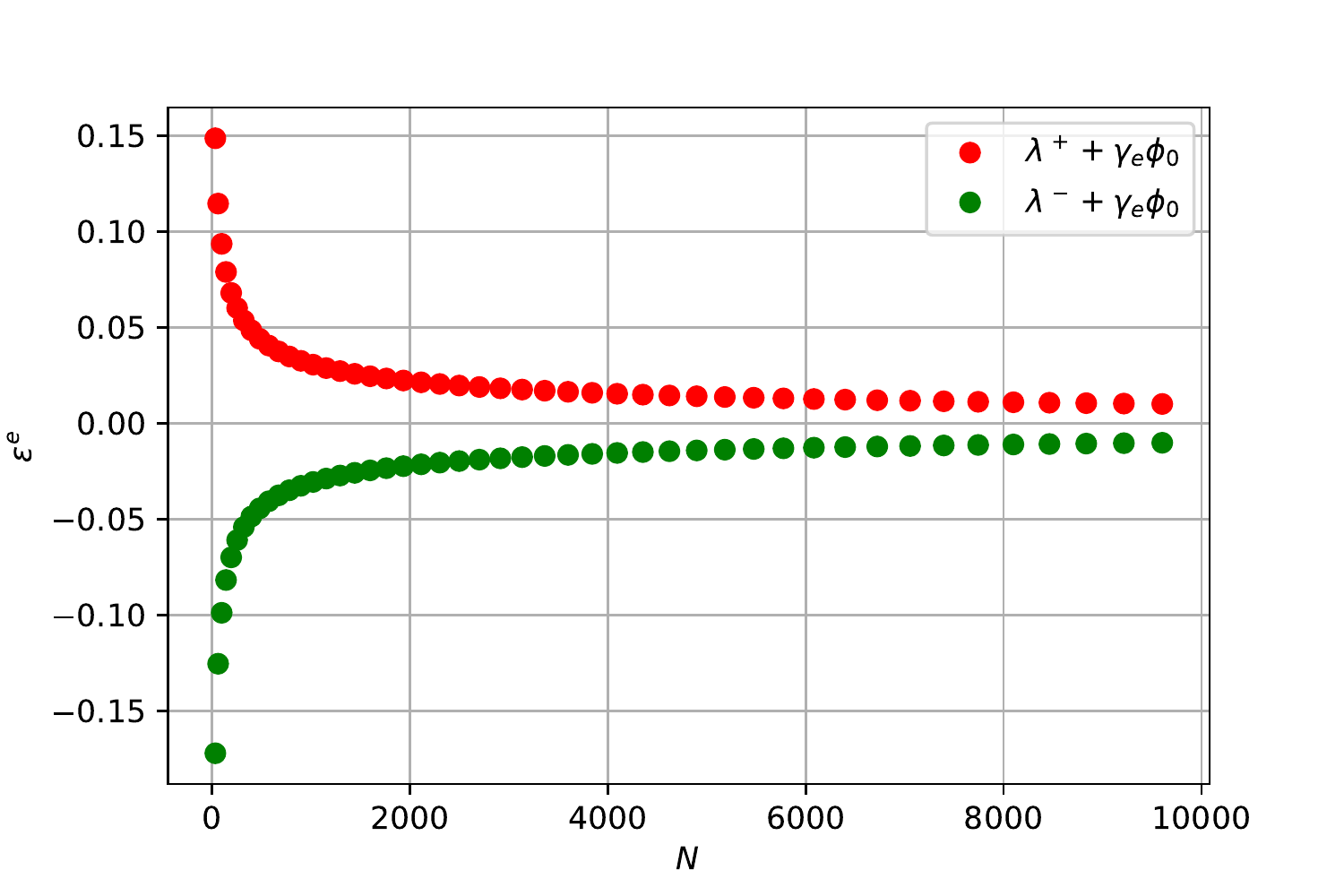}
\caption{Plot of $\pm\epsilon^v$ and $\pm\epsilon^e$ as a function of $N$ in order to check the first hypothesis. In the left-hand graph, we mark one vertex. In the right-hand graph, we mark one edge.}\label{Fig1}
\end{figure}

In order to help the checking of the first hypothesis, the left-hand graph of Fig.~\ref{Fig1a} depicts the loglog plot of the numerical calculation of $\epsilon$ (red dots) as a function of $N$ when there is one marked vertex together with the analytical results (black crosses) described by Eq.~(\ref{eq:eps_v}). The equation of the straight line that fits the numerical results is $\log_{10}\epsilon^v=-0.49\log_{10}N-0.079$, which is a good approximation to the behavior of $\epsilon^v$ as a function of $N$ given by Eq.~(\ref{eq:eps_v}).  The right-hand graph depicts the equivalent results when there is one marked edge. The numerical results matches the behavior of $\epsilon^e$ as a function of $N$ described by Eq.~(\ref{eq:eps_e}). Those graphs are good indications not only that the spectral gap tends to zero asymptotically but also that the algebraic results are correct. 

\begin{figure}[!ht]
\centering
\includegraphics[scale=0.5]{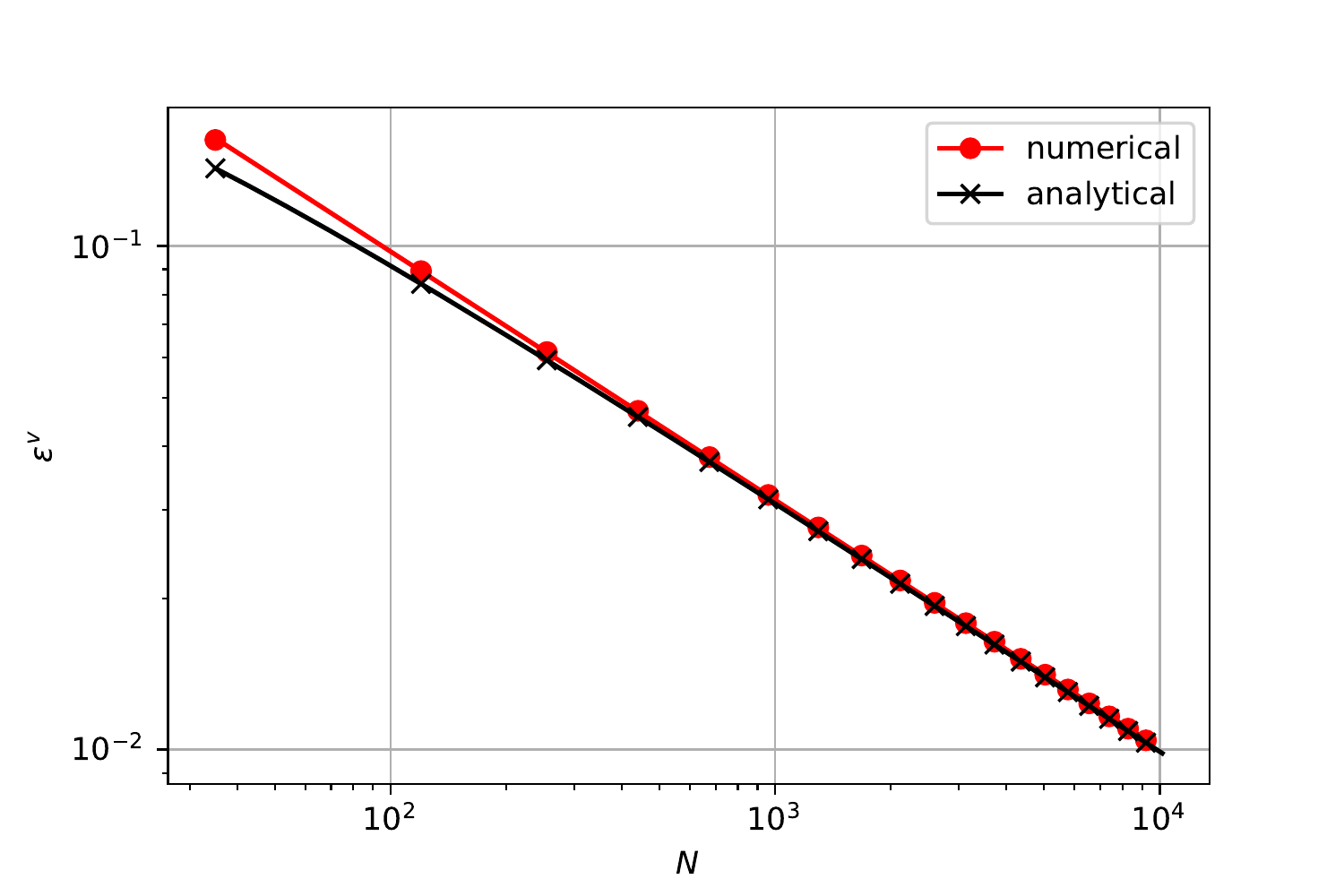}
\includegraphics[scale=0.5]{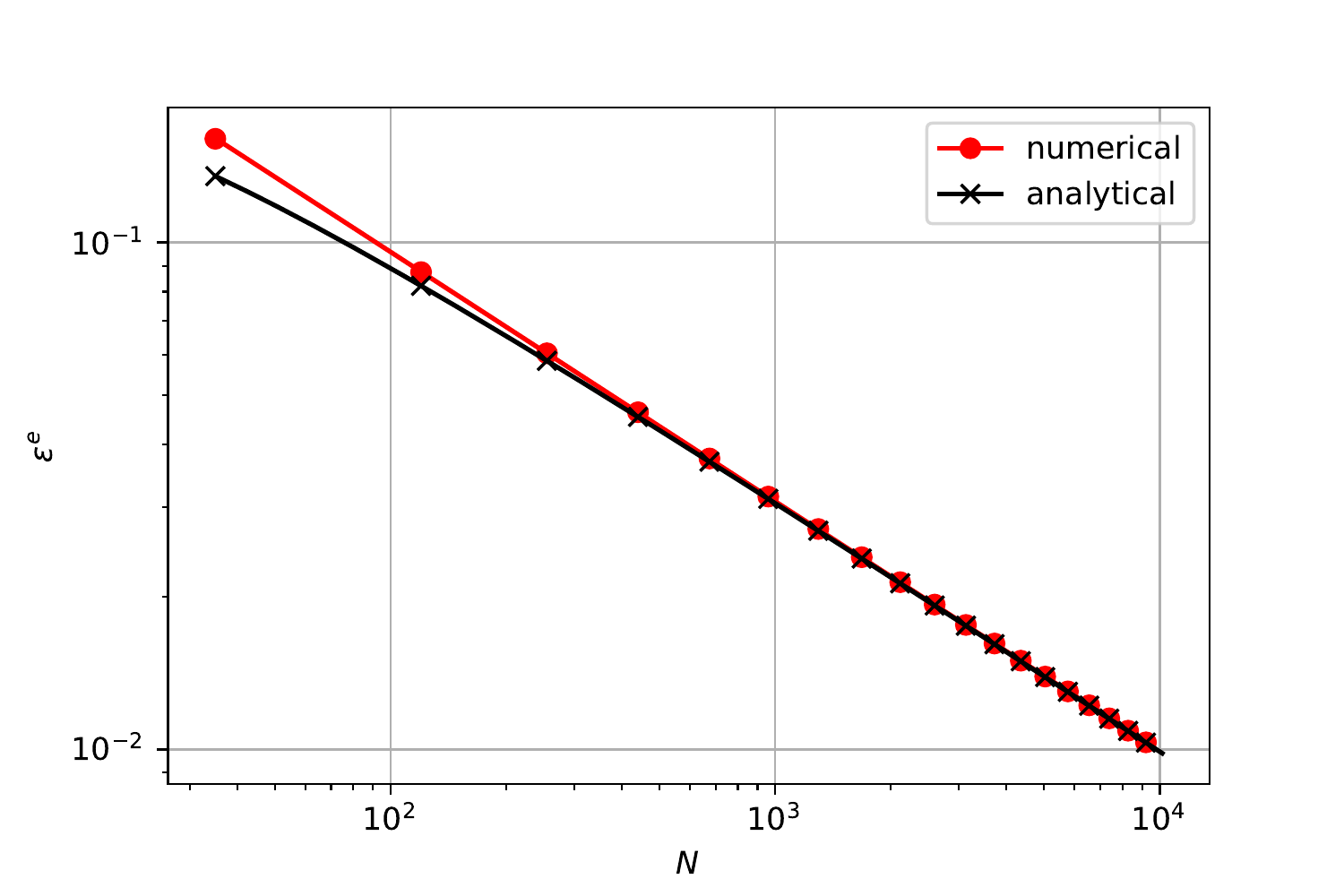}
\caption{Loglog plot of $\epsilon^v$ and $\epsilon^e$ as a function of $N$.}\label{Fig1a}
\end{figure}

The second hypothesis is used in Eq.~(\ref{bra_lam_w}), which assumes that the product of $\braket{\lambda^+}{\psi(0)}$ and $\braket{w}{\lambda^+}$, which are the overlaps of the initial state and the marked state with $\ket{\lambda^+}$, is equal to minus the product of $\braket{\lambda^-}{\psi(0)}$ and $\braket{w}{\lambda^-}$, asymptotically. This hypothesis is not strictly necessary in order to the method described in sec.~\ref{sec:method} to work. We assume it for simplicity, because the probability distribution as a function of $t$ becomes the square of a sinusoidal function. This hypothesis is analyzed in the graphs of Fig.~\ref{Fig2}. The left-hand graph depicts $\pm\braket{\lambda^\pm}{\psi(0)}\braket{w}{\lambda^\pm}$ as a function of $N$ when there is one marked vertex, and the right-hand graph when there is one marked edge. The numerical calculations again indicates that it is reasonable to work under this hypothesis.

\begin{figure}[!ht]
\centering
\includegraphics[scale=0.5]{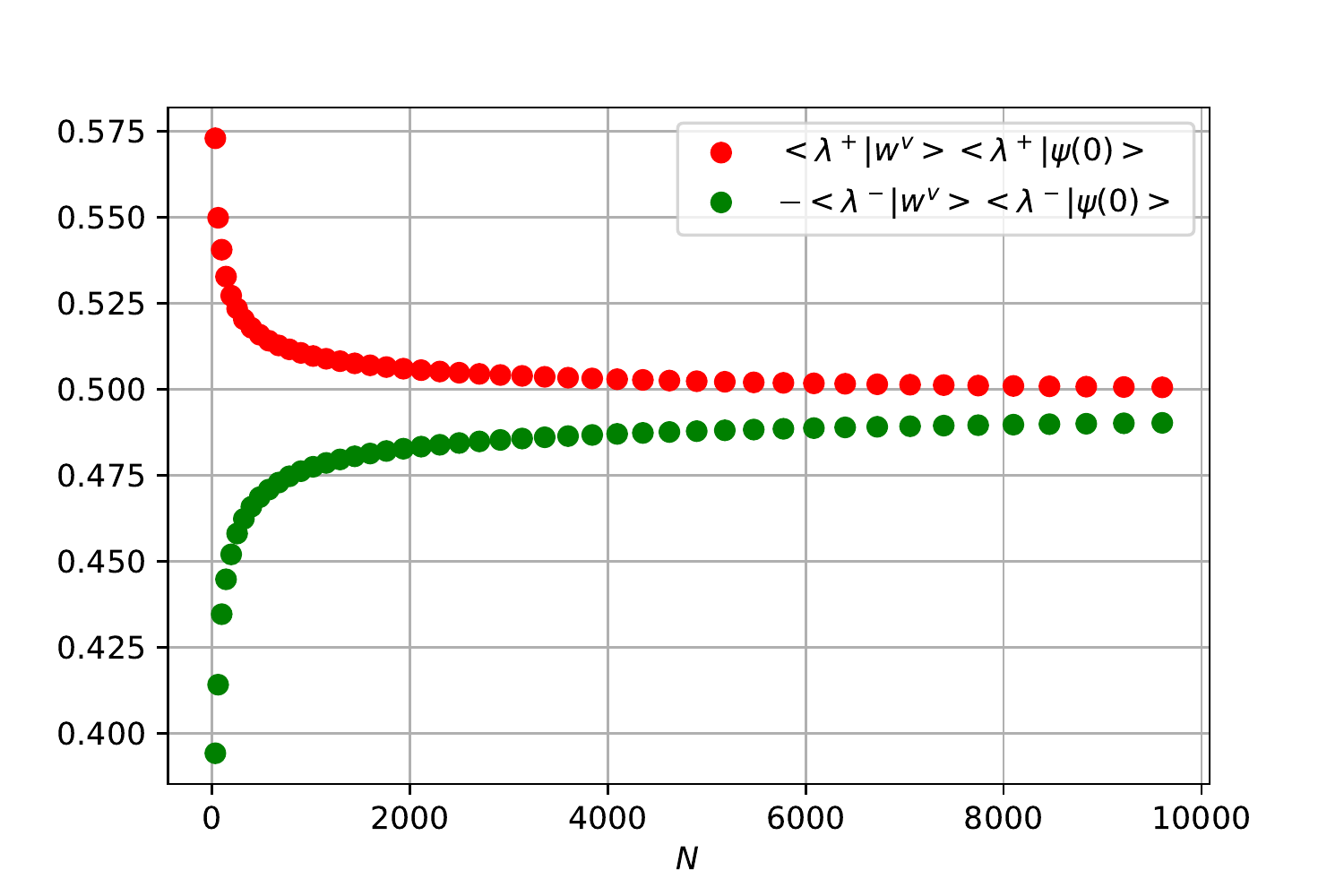}
\includegraphics[scale=0.5]{SymetryVofbraket+braket-nmin=5,nmax=100,salto=2.pdf}
\caption{Plot of the quantity $\pm\braket{\lambda^-}{\psi(0)}\braket{w}{\lambda^-}$ as a function of $N$ in order to check the second hypothesis. In the left-hand graph, we mark one vertex. In the right-hand graph, we mark one edge.}\label{Fig2}
\end{figure}

\section{Final remarks}\label{SecConclusion}

We have defined a continuous-time quantum walk version on a graph $G$, which allows the walker to hop from vertices to edges and vice versa. In our model, the walker may hop from (1)~vertex to vertex, (2)~vertex to edge, (3)~edge to vertex, and (4)~edge to edge.  To consistently define the evolution operator, we use the adjacency matrix of the total graph $T(G)$. Using this quantum walk model, we have shown that the spatial search algorithm on the complete bipartite graph can find a marked vertex or a marked edge in $O(\sqrt{N})$ steps with success probability $1-o(1)$.

There are at least three interesting extensions of the total quantum walk model on a graph $G$. In the first one, the walker may hop from (1)~vertex to edge, (2)~edge to vertex, and (3)~edge to edge. In this case, we have to use the adjacency matrix of the $Q$-graph $Q(G)$. In the second one, the walker may hop from (1)~vertex to vertex, (2)~vertex to edge, (3)~edge to vertex. In this case, we have to use the adjacency matrix of the $R$-graph $R(G)$.
 In the third one, the walker may hop from (1)~vertex to edge, and (2)~edge to vertex; it not allowed to hop from vertex to vertex or from edge to edge. In this case, we have to use the adjacency matrix of the subdivision graph $S(G)$. We intend to analyze those extensions in future works.


\section*{Data availability}
All data generated or analyzed during this study are included in this published article.

\section*{Acknowledgements}
The work of C. F. T. da Silva was supported by CNPq grant number 141985/2019-4. The work of R. Portugal was supported by FAPERJ grant number CNE E-26/202.872/2018, and CNPq grant number 308923/2019-7. The authors have no competing interests to declare that are relevant to the content of this article.


\end{document}